\newtheorem{theorem}{Theorem}
\DeclareMathOperator{\sinc}{sinc}
\newcommand{\comment}[1]{{ }}
\title{\LARGE \bf
Energy Optimal Control of a Harmonic Oscillator with a State Inequality Constraint
}
\author{Mi Zhou $^{1}$, Erik I Verriest $^{1}$, and Chaouki Abdallah $^{1}$
\thanks{$^{1}$  Mi Zhou, Erik I Verriest, and Chaouki Abdallah are with the School of Electrical and Computer Engineering, Georgia Institute of Technology, Atlanta, GA 30332.
        {\tt\small mzhou91@gatech.edu, erik.verriest@ece.gatech.edu, ctabdallah@gatech.edu}}
}
\begin{document}
\maketitle
\thispagestyle{empty}
\pagestyle{empty}
\begin{abstract}
In this article, the optimal control problem for a harmonic oscillator with an inequality constraint is considered.
The applied energy of the oscillator during a fixed final time period is used as the performance criterion.
The analytical solution with both small and large terminal time is found for a special case when the undriven oscillator system is initially at rest.
For other initial states of the Harmonic oscillator, the optimal solution is found to have three modes: wait-move, move-wait, and move-wait-move given a longer terminal time.
\end{abstract}

\section{INTRODUCTION}
{\it "Now, here, you see, it takes all the running you can do, to keep in the same place." from Alice in Wonderland, Lewis Carroll.}~\\
This sentence is used to describe Alice constantly running but remaining in place.
We found the same phenomenon in the optimal control of harmonic oscillators where the optimal behavior may involve ``remaining in place" for some time.
The details will be shown in Section \ref{sec:theory}.

The problem of minimum time optimal control of a pendulum is a classical problem  \cite{Pontryagin}.
When the control magnitude is bound, it is described by a second-order differential equation $\Ddot{x}+x = u$ where $x$ is the position with $|u| \leq 1$. The resulting optimal control $u$ is known as bang-bang control, i.e., $u=-\mathrm{sign}(\dot x)$.
Separating the domain $u=-1$ of the phase plane from the domain $u=+1$ are unit semicircle centered at points of the form $(2k+1,0)$ ($k\in \mathbb{Z}$) \cite{minidamp}.
In \cite{Russian}, the authors solved the optimal control problem for the harmonic oscillator with a fixed initial and final state under bounded control actions, and for two types of 
performance criteria.
Reference \cite{param} presented a solution to the minimum time control problem for a classical harmonic oscillator to reach a target energy from a given initial state by controlling its frequency.

Finding the analytical solutions to optimal control problems may not be possible in general.
However, there are many numerical methods to solve optimal control problems with/without constraints.
An introduction and reviews of numerical methods can be found in \cite{Bryson,survey,kelly}.

In this article, we model the optimal control problem of harmonic oscillators with state inequality and solve it using the indirect optimal control method based on Pontryagin's minimum (maximum) principle \cite{surveyMaximumP}.
The solutions are found to have three modes: wait-move (WM), move-wait (MW), and move-wait-move (MWM), which may be justified using human experience and physics.
To the best of our knowledge, this is the first work that such behavior has been found and explained.

This article is organized as follows: In Section \ref{sec:problem}, we formulate our problem and provide some preliminaries of optimal control with global inequalities.
In Section \ref{sec:theory}, we present the analytical solution to this formulated optimal control problem.
Section \ref{sec:simulation} presents the simulation results to illustrate our analytical solutions.
Finally, we conclude our article in Section \ref{sec:conclusion}.

\section{Problem description}\label{sec:problem}
A general undamped harmonic oscillator has the form $m\Ddot{x}+kx=u$.
Figure \ref{fig:hc} is a harmonic oscillator with mass $m$, stiffness $k$, and external excitation $u$.
Without loss of generality, 
we let $m=1$, $k=1$ in our problem.
We consider the following linear state space realization for the undamped  harmonic oscillator with fixed oscillation frequency:
\begin{align} \label{eqn:system}
\dot x_1 &= x_2 \nonumber\\ 
\dot x_2 &= -x_1+u,
\end{align}
where $x_1 \in \mathbb{R}^1$ is the position, $x_2 \in \mathbb{R}^1$ is its velocity, and $u \in \mathbb{R}^1$ is the control input.
The performance criterion is the applied energy
\begin{align}\label{eqn:perfo}
J= \frac{1}{2}\int_0^T u^2 \mathrm{d}t,
\end{align}
with $T$ the fixed terminal time.
We consider the  state inequality constraint
\begin{align}\label{eqn:constr}
x_2 \geq 0.
\end{align}
which means only forward motion is allowed.
We would like to find the optimal control law $u^*(t)$ such that the criterion \eqref{eqn:perfo} is minimized and the dynamics \eqref{eqn:system} are satisfied with the inequality constraint \eqref{eqn:constr}.
\begin{figure}[!htp]
    \centering
    \includegraphics[width=\linewidth]{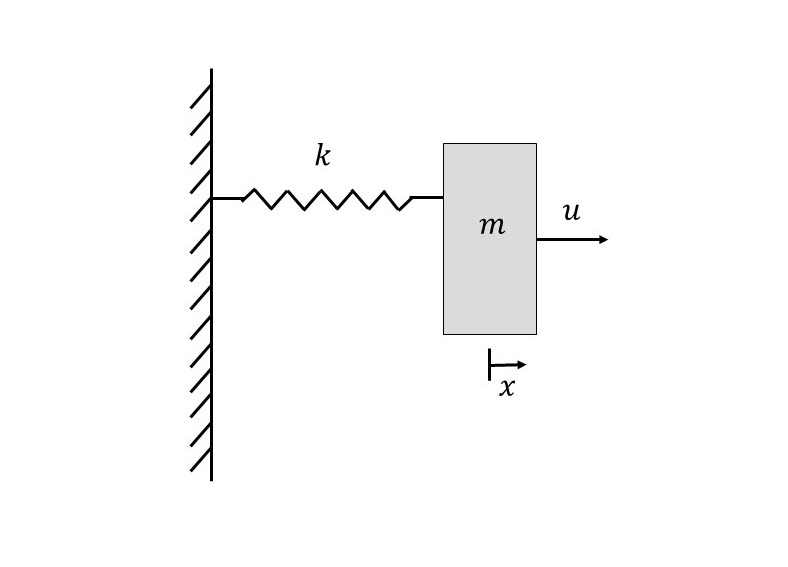}
    \caption{Mechanical harmonic oscillator.}
    \label{fig:hc}
\end{figure}
\subsection{Optimal control with inequality constraints}
For a general optimal control problem 
$$\dot{x} = f(x,u,t)$$
with global inequality constraint $w(x,u,t) \leq 0$, we can construct the Hamiltonian function 
\begin{equation*}
H = L(x,u) + \lambda^\top f(x,u,t)+\mu^\top w(x,u,t),
\end{equation*}   
where $\mu = [\mu_1, \mu_2, \cdots, \mu_r]$.
\begin{theorem}[\cite{optimalcontrolbook}] \label{thm:optimality}
The multiplier $\mu_i$, $i=1,2,\cdots,r$ must satisfy the complimentary slackness condition:
\begin{align*}
\mu_i \left\{\begin{matrix}
\geq 0, \quad w_i(x,u,t) = 0\\ 
=0,\quad w_i(x,u,t)<0
\end{matrix}\right..
\end{align*}
The necessary conditions for $u^*$ (corresponding $x^*$) to be an optimal solution are that there exist $\lambda$, $\mu$ which satisfy the following conditions:
\begin{enumerate}
    \item $\dot x^* = f(x^*, y^*, t)$, $x^*(0)=x_0$, $x^*(T)=x_f$.
    \item Pontryagin minimum condition:\\ $H[x^*(t), u^*(t),\lambda(t), t] \leq H[x^*(t), u(t), \lambda(t), t]$.
    \item The Euler-Lagrange equation:
    \begin{align*}
      \dot \lambda  = \left(-\frac{\partial L}{\partial x} -\lambda^\top \frac{\partial f}{\partial x} +\mu^\top \frac{\partial w}{\partial x}\right)\mid_{x=x^*}.
    \end{align*}
\end{enumerate}
\end{theorem}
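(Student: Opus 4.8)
The statement is the standard first-order necessary conditions (a Pontryagin-type principle) for a problem with a mixed state--control inequality constraint, so the plan is to obtain it by adjoining both the dynamics and the constraint to the cost and then taking variations. First I would form the augmented functional
\begin{equation*}
\bar J = \int_0^T \Big[ L(x,u) + \lambda^\top\big(f(x,u,t)-\dot x\big) + \mu^\top w(x,u,t)\Big]\,\mathrm{d}t,
\end{equation*}
which agrees with $J$ on any feasible trajectory once the complementary slackness rule is imposed: on the subintervals where the $i$-th constraint is inactive ($w_i<0$) set $\mu_i=0$, so the added term vanishes, while on the subintervals where it is active ($w_i=0$) the added term again vanishes but $\mu_i$ is left free. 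The sign requirement $\mu_i\ge 0$ on active arcs is exactly the condition that the penalty term cannot \emph{reward} a feasible one-sided perturbation that pushes $w_i$ strictly negative; equivalently, it is the limit of the multipliers produced by an exterior quadratic penalty $\tfrac{1}{2\rho}\sum_i \big(\max\{0,w_i\}\big)^2$ as $\rho\downarrow 0$.

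Next I would compute the first variation of $\bar J$ along admissible perturbations $(\delta x,\delta u)$, with $\delta x(0)=0$ (and $\delta x(T)=0$ when the terminal state is prescribed). Collecting terms and writing $H = L+\lambda^\top f+\mu^\top w$,
\begin{equation*}
\delta\bar J = \int_0^T\Big[\big(\tfrac{\partial H}{\partial x}+\dot\lambda^\top\big)\delta x + \tfrac{\partial H}{\partial u}\,\delta u\Big]\,\mathrm{d}t
\end{equation*}
after integrating the $-\lambda^\top\delta\dot x$ term by parts and discarding the boundary contribution $[\lambda^\top\delta x]_0^T$, which vanishes under the endpoint conditions. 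Choosing the costate $\lambda$ to annihilate the coefficient of $\delta x$ yields the Euler--Lagrange (adjoint) equation of item~3 — with the constraint entering through $\partial w/\partial x$ and the sign fixed by the convention chosen for $H$ — while item~1 is just feasibility of the optimal pair $(x^*,u^*)$.

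Finally, for the Pontryagin minimum condition in item~2 a weak (variational) argument only delivers stationarity of $H$ in $u$ on interior arcs, so I would instead use needle variations: replace $u^*$ on a short interval $[\tau,\tau+\varepsilon]$ by an arbitrary admissible value $v$, propagate the induced state perturbation, and evaluate the resulting change in cost, which to first order equals $\varepsilon\big(H[x^*,v,\lambda,\tau]-H[x^*,u^*,\lambda,\tau]\big)+o(\varepsilon)$; optimality forces this to be nonnegative for every admissible $v$ and a.e.\ $\tau$, which is precisely the claimed inequality. I expect the main obstacle to be the mixed nature of $w(x,u,t)\le 0$: the set of admissible control variations is state dependent, so one needs a constraint qualification — for instance, that the rows of $\partial w/\partial u$ indexed by the currently active constraints are linearly independent — to guarantee that feasible directions exist and that $\mu$ is well defined with the asserted sign; moreover $\mu(\cdot)$ need only be measurable (piecewise continuous) and may jump at junctions between constrained and free arcs, so the integration-by-parts step and the needle-variation estimate have to be carried through at that reduced regularity, together with separate corner conditions at the junction times.
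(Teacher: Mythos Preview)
Your sketch is a reasonable outline of the standard derivation of the Pontryagin-type necessary conditions with mixed constraints, but it is worth noting that the paper does not prove this theorem at all: it is quoted from a textbook (the bracketed citation in the theorem header) and used as a black box. So there is no ``paper's own proof'' to compare against; the authors simply import the result and immediately apply it to derive the stationarity conditions \eqref{eqn:OC} for their specific Hamiltonian.

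As for the proposal itself, the variational/needle-variation argument you outline is the conventional route and would recover the three items as stated. One small caveat: the constraint actually used later in the paper, $x_2\ge 0$, is a \emph{pure state} constraint (no explicit $u$-dependence), so the constraint qualification you mention on $\partial w/\partial u$ is vacuous there and the problem is of the harder ``state-constraint'' type, where the multiplier $\mu$ may be a measure rather than an $L^1$ function and the adjoint can jump at junction times. Your closing remarks already anticipate this, but if you were to flesh the proof out you would need the state-constraint version of the maximum principle (e.g.\ via the indirect adjoining of successive time derivatives of $w$ until the control appears, or the direct measure-multiplier formulation) rather than the mixed-constraint version your qualification hypothesis targets.
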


The solution of an optimal control problem with global inequality constraints is continuous but may not be a differentiable function.

In the following section, we will use Theorem \ref{thm:optimality} to analyze the solution of the defined optimal control problem Eqn. \eqref{eqn:system}, \eqref{eqn:perfo}, \eqref{eqn:constr}.

\section{Solution of the problem: Theoretical analysis}\label{sec:theory}
The proposed system \eqref{eqn:system} models a mechanical spring system.
Intuitively, when there is no input, and if the initial displacement is greater than zero, the spring is in a state of stretching. 
The system will then try to go to the state equilibrium $(0,0)$ thus requiring the velocity to be less than zero at the beginning and then behave as a harmonic oscillator.
If the spring is initially compressed, the initial velocity will be positive then show the tendency of harmonic oscillation.

To solve the optimal control problem for such a system, we first construct the following Hamiltonian
\begin{align*}
H = \frac{1}{2}u^2 +\lambda_1 x_2 +\lambda_2(-x_1+u)-\mu x_2.
\end{align*}
Using the optimality conditions derived above, we have
\begin{align} \label{eqn:OC}
u &= -\lambda_2 \nonumber\\
\dot \lambda_1 &= \lambda_2 \nonumber\\
\dot \lambda_2 &= -\lambda_1 + \mu
\end{align}
where $\mu=0$ if the constraint is not activated, otherwise it must satisfy $\mu\geq 0$.

Solving \eqref{eqn:system} and \eqref{eqn:OC} with the constraint not activated (i.e., $\mu=0$), we have
{\small 
\begin{align}\label{eqn:ori_sol}
x_1(t) &= C_1 \cos t +C_2 \sin t +C_3 t \cos t +C_4 t\sin t \\
\nonumber x_2(t) &= (C_4-C_1)\sin t +(C_2+C_3)\cos t -C_3 t \sin t +C_4 t \cos t \\
\nonumber u(t) &= -2C_3 \sin t +2 C_4\cos t
\end{align}
}%
where $C_1, C_2, C_3, C_4$ are constants.

Denote the initial state as $(s,0)$ and the terminal state as $(x_f,0)$.
We then analyze the solution to the above problem from four different scenarios: 
\begin{enumerate}
    \item 
$s=0$ and $x_f>0$;
\item
$0<s<x_f$;
\item $s<0<x_f$; 
\item 
$s<x_f<0$.
\end{enumerate}
The scenario that $x_f<s$ can not happen because the inequality constraint implies that the spring can not move backward.
The terminal time is denoted as $T$.
\subsection{Initial state $(0,0)$ and $x_f>0$} \label{subsec:s1}
In the first scenario, we consider the initial state $(0,0)$  and the terminal position $x_f>s=0$.
The initial state $(0,0)$ means that the harmonic oscillator is in the natural state at the beginning, neither stretching nor compressing, with no velocity either.

Substituting the initial state $(x_1(0), x_2(0)) = (0,0)$ into Eqn. \eqref{eqn:ori_sol}, we obtain
\begin{align*}
C_1 &= 0\\
C_2+C_3 &= 0.
\end{align*}
Henceforth,
\begin{align}
x_1(t) &= C_2\sin t -C_2 t \cos t+C_4 t \sin t
\end{align}
and 
\begin{align} \label{eqn:x2}
x_2(t) &= C_4 \sin t +C_2 t \sin t +C_4 t \cos t.
\end{align}
Using the terminal state $(x_1(T), x_2(T))=(x_f,0)$, we can formulate the following equality with respect to the unknown constants $C_2$ and $C_4$,
\begin{align*}
\underbrace{\begin{bmatrix}
\sin T-T\cos T & T\sin T\\ 
T \sin T &T\cos T+\sin T 
\end{bmatrix}}_{M}
\begin{bmatrix}
C_2\\ 
C_4
\end{bmatrix} = \begin{bmatrix}
x_f\\ 
0
\end{bmatrix}.
\end{align*}
It is obvious that $\det(M) = \sin^2T-T^2 <0$ for all $T>0$.
Thus solving the above equations, we can obtain
\begin{align}
C_2 &= -\frac{x_f(\sin T + T\cos T)}{T^2-\sin^2T} \\
C_4 &= \frac{x_f T  \sin T}{T^2-\sin^2T}.
\end{align}
Substituting the value of $C_2$ and $C_4$ into Eqn. \eqref{eqn:x2}, we have
\begin{align} \label{eqn:finalx2}
x_2(t) = \beta T t \sin(T-t) + \beta (T-t)\sin T \sin(t)
\end{align}
and
\begin{align} \label{eqn:optimalcontrol}
u(t) = -2\beta \sin T \sin t+2\beta T \sin (T-t)
\end{align}
where $\beta = \frac{x_f}{T^2-\sin^2 T} >0$.
Analyzing the Eqn. \eqref{eqn:finalx2}, we find that there exists a $T^*$ when $T\leq T^*$, $x_2(t) \geq 0$.
\begin{theorem}
The boundary of the terminal time $T$ for the state becoming nonsmooth is
\begin{align*}
T^* = \pi,
\end{align*}
\end{theorem}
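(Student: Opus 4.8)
The plan is to analyze when the function $x_2(t)$ in Eqn. \eqref{eqn:finalx2} first fails to remain nonnegative on $[0,T]$, since the unconstrained solution \eqref{eqn:ori_sol} is valid precisely as long as the state inequality \eqref{eqn:constr} is not violated. Writing $x_2(t) = \beta\left[T t \sin(T-t) + (T-t)\sin T \sin t\right]$ with $\beta>0$, I would first check the behavior at the endpoints: $x_2(0) = x_2(T) = 0$, consistent with the boundary conditions. Since $x_2$ vanishes at both ends, the relevant question is whether the bracketed expression $g(t) := T t \sin(T-t) + (T-t)\sin T \sin t$ stays $\ge 0$ on the open interval. For small $T$ every factor $\sin(T-t)$, $\sin T$, $\sin t$ is positive on $(0,T)$, so $g(t)>0$ and the solution is smooth (indeed analytic); the constraint only becomes active once $T$ is large enough that $g$ can dip to zero somewhere inside.

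Next I would pin down the threshold by looking at the derivative of $x_2$ at the right endpoint, or equivalently examine where an interior zero of $g$ first appears as $T$ increases. A cleaner route: compute $\dot x_2(T)$ from \eqref{eqn:finalx2}. Differentiating, the term $Tt\sin(T-t)$ contributes $-Tt\cos(T-t)\big|_{t=T} + (\text{terms with }\sin(T-t)\to 0) = -T^2$, and the term $(T-t)\sin T\sin t$ contributes $-\sin T\sin t\big|_{t=T} + (T-t)\sin T\cos t\big|_{t=T}= -\sin^2 T$; hence $\dot x_2(T) = -\beta(T^2+\sin^2 T) < 0$ for all $T>0$. So $x_2$ is always decreasing as it arrives at $0$ at $t=T$, which is fine (it approaches $0$ from above). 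The nonsmoothness instead arises when $x_2$ touches $0$ at an \emph{interior} point, i.e.\ when $g$ has a double root in $(0,T)$. I would argue by symmetry: $g$ is symmetric under $t\mapsto T-t$ (swapping the two terms), so the natural candidate for an interior zero is the midpoint $t = T/2$. Evaluating, $g(T/2) = T\cdot\tfrac{T}{2}\sin(T/2) + \tfrac{T}{2}\sin T\sin(T/2) = \tfrac{T}{2}\sin(T/2)\left[T + 2\cos(T/2)\sin(T/2)/\!\sin(T/2)\right]$ — more directly, $g(T/2) = \tfrac{T}{2}\sin(T/2)\bigl(T + \sin T/\!\sin(T/2)\bigr) = \tfrac{T}{2}\sin(T/2)\bigl(T + 2\cos(T/2)\bigr)$. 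This is positive as long as $\sin(T/2)>0$, i.e.\ $T<2\pi$, which suggests the midpoint is not where the first violation occurs, so I would instead track the location of the minimum of $g$ on $(0,T)$ as a function of $T$ and find the smallest $T$ at which $\min_{(0,T)} g = 0$.

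The cleanest characterization is: the first loss of smoothness happens at the smallest $T$ for which $x_2(t)\ge 0$ fails, and I expect this to coincide with $x_2$ developing a zero together with $\dot x_2 = 0$ at some $t_0\in(0,T)$ (a tangency with the constraint boundary). I would set up the two equations $g(t_0)=0$, $g'(t_0)=0$ and solve; I anticipate that at $T=\pi$ the tangency occurs at $t_0 = 0$ (or $t_0=T$) — note that near $T=\pi$, $\sin T \to 0$, so \eqref{eqn:finalx2} degenerates and the sign of $x_2$ near the endpoints is governed by higher-order terms. Concretely, for $T=\pi$: $x_2(t) = \beta\bigl[\pi t\sin(\pi - t) + (\pi - t)\cdot 0\cdot \sin t\bigr] = \beta\pi t\sin t \ge 0$ on $[0,\pi]$, with equality only at the endpoints, and one checks $x_2$ is smooth there; for $T$ slightly larger than $\pi$, $\sin T<0$ makes the second term $\beta(T-t)\sin T\sin t$ negative near $t=0^+$ and near $t=T^-$, and a short expansion shows $x_2$ goes negative just inside the interval, forcing the constraint $x_2\ge 0$ to become active and the solution to become nonsmooth. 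The main obstacle is making this last sign analysis rigorous and uniform — i.e.\ proving that for \emph{every} $T\le\pi$ the bracketed function $g(t)$ is $\ge 0$ on $[0,T]$ (not just at sampled points), and that for every $T>\pi$ it dips below zero; I would handle the first direction by noting all three sine factors are nonnegative on $(0,T)$ when $T\le\pi$, and the second by the endpoint expansion just described together with continuity in $T$.
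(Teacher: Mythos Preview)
Your proposal is correct and, once you set aside the detours through $\dot x_2(T)$ and the midpoint symmetry (which you rightly abandon), the argument you land on in the final paragraph is essentially the paper's own proof: for $T\le\pi$ all three factors $\sin(T-t)$, $\sin T$, $\sin t$ are nonnegative on $(0,T)$, so $x_2\ge0$, while for $T=\pi+\epsilon$ a first-order expansion near $t=0^+$ shows $x_2$ dips below zero. The only cosmetic difference is that the paper writes $x_2(t)=\beta\,T t(T-t)\bigl[\sinc(T-t)+\sinc(T)\sinc(t)\bigr]$, which separates the nonnegative prefactor from the sign-determining bracket and makes the $T\le\pi$ case immediate.
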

\begin{proof}
First, we re-organize $x_2(t)$ as follows:
\begin{align*}
x_2(t) &= \beta T t (T-t) \left [ \frac{\sin(T-t)}{T-t}  +\frac{\sin T}{T}\frac{\sin t}{t}  \right ] \\
&= \underbrace{\beta T t (T-t)}_{\geq 0} [\sinc(T-t)+ \sinc(T) \sinc(t)]
\end{align*}
where we define $\sinc(x)=\frac{\sin x}{x}$.
When $T \in (0,\pi]$, $\sinc(T) \geq 0,\; \sinc(t) \geq 0,\; \sinc(T-t)\geq 0$. Thus, $x_2(t)\geq 0$.
Suppose $T=\pi+\epsilon$ where the small perturbation $\epsilon>0$ is the new bound.
Substitute it into 
\begin{align*}
\gamma(t) &= \sinc(T-t)+\sinc(T)\sinc t \\
&= \frac{-\cos(\epsilon -t)}{\pi+\epsilon-t}-\frac{\cos\epsilon}{\pi+\epsilon}\frac{\sin t}{t}. 
\end{align*}
We then find that $\gamma(0) = -\frac{\cos\epsilon}{\pi+\epsilon} -\frac{\cos \epsilon}{\pi +\epsilon}<0$ since $\lim_{t\rightarrow 0}\frac{\sin t}{t} = 1$.
Thus $T^*$ must be $\pi$.
\end{proof}

If $T$ is chosen to satisfy that $x_2(t) \geq 0$ for $t\in [0,T]$, then Eqn. \eqref{eqn:finalx2} is the analytical solution.
When the terminal time $T>\pi$, the above analysis fails.
Looking at the system dynamics, we find that there is no cost to keep the state stagnating at the initial state.
That being said, the optimal solution in this case is ``WM'':
\begin{align} \label{eqn:x1_general}
x_1(t) = \left\{\begin{matrix}
0, \quad t\in [0, \tau]\\ 
x_1^c(t), \quad t\in [\tau,T]
\end{matrix}\right.
\end{align}
and 
\begin{align} \label{eqn:x2_general}
x_2(t) = \left\{\begin{matrix}
0, \quad t\in [0, \tau]\\ 
x_2^c(t), \quad t\in [\tau,T]
\end{matrix}\right.
\end{align}
where $\tau = T-\pi$, $x_1^c(t)$ and $x_2^c(t)$ are continuously differentiable functions dependent on time.
The corresponding optimal controller has the form:
\begin{align}
u(t) = \left\{\begin{matrix}
0, \quad t \in [0,\pi]\\ 
u^c(t),\quad t \in [\pi, T]
\end{matrix}\right. .
\end{align}

Substituting the boundary condition $(x_1(\tau),x_2(\tau) )= (0,0)$, we thus have
{\small 
\begin{align*}
C_1 \cos \tau +C_2 \sin \tau+C_3\tau \cos \tau+C_4 \tau \sin \tau &= 0\\
(C_4-C_1)\sin \tau+(C_2+C_3)\cos \tau
-C_3 \tau \sin \tau +C_4 \tau\cos \tau &= 0\\
C_1 \cos T+C_2 \sin T +C_3 T \cos T +C_4 T \sin T &= x_f \\
(C_4-C_1)\sin T +(C_2+C_3)\cos T-C_3 T\sin T+C_4 T \cos T &= 0.
\end{align*}
}%
Substituting $\tau = T-\pi$ and simplifying the above equation, we have
{\small 
\begin{align*}
&\begin{bmatrix}
-\cos T &-\sin T  &-(T-\pi)\cos T  & -(T-\pi)\sin T\\ 
\sin T & -\cos T & (T-\pi)\sin T-\cos T &-\sin T-(T-\pi)\cos T \\ 
\cos T &\sin T  & T\cos T &T\sin T \\ 
-\sin T & \cos T & \cos T-T \sin T &\sin T+T \cos T 
\end{bmatrix}\\
& \cdot \begin{bmatrix}
C_1\\ 
C_2\\ 
C_3\\ 
C_4
\end{bmatrix}=\begin{bmatrix}
0\\ 
0\\ 
x_f\\ 
0
\end{bmatrix}.
\end{align*}
}%
By solving the above linear equations, we obtain
\begin{align*}
C_1 &= \frac{x_f \sin T+\pi x_f \cos T-Tx_f \cos T}{\pi}\\
C_2 &= - \frac{x_f (\cos T-\pi\sin T+T \sin T)}{\pi}\\
C_3 &= \frac{x_f\cos T}{\pi}\\
C_4 &= \frac{x_f\sin T}{\pi}.
\end{align*}
Thus the optimal controller for the time interval $[\pi, T]$ is
\begin{align*}
u^c(t) = \frac{2x_f}{\pi}\sin (T-t)
\end{align*}
which makes the optimal cost
\begin{align*}
J^* = \frac{x_f^2}{\pi},
\end{align*}
which implies that the optimal cost is only a function of $x_f$.
\subsection{Initial state $(s,0)$}
Here Eqn. \eqref{eqn:ori_sol} still holds.
The key point here is to confirm the constant $C_1 \sim C_4$ using known boundary conditions.
We then study the solution for the initial arbitrary state on the constraint.
When $T$ is small, we substitute the initial state $(s,0)$ into Eqn. \eqref{eqn:ori_sol} and obtain
\begin{align*}
M \begin{bmatrix}
C_2 \\ 
C_4 
\end{bmatrix} = \begin{bmatrix}
x_f-s\cos T\\ 
s \sin T
\end{bmatrix}
\end{align*}
which gives
\begin{align*}
C_1 &= s \\
C_2 &= \frac{s(T+\cos T \sin T)-x_f (\sin T+T\cos T)}{T^2-\sin^2 T} \\
C_3 &= -C_2 \\
C_4 &= \frac{x_f T \sin T -s \sin^2 T}{T^2-\sin^2 T}.
\end{align*}
Thus,
\begin{align} \label{eqn:as_anys}
\nonumber x_1(t) = s \cos t +C_2 (\sin t -t \cos t)+C_4 t \sin t\\
x_2(t) = C_4 (\sin t +t \cos t) -s \sin t +C_2 t \sin t.
\end{align}

When $T$ is large, the constraint will be activated at some time interval.
Here, we divide it into the following three cases.
\subsubsection{Case 1: $0<s<x_f$}
When $T$ is large, the stay will be in the initial state (i.e., ``WM'').
Similar to the case when $s=0$, we assume $\tau$ as the staying time.
Thus, the solutions of $x_1$ and $x_2$ have the same form as Eqn. \eqref{eqn:x1_general} and Eqn. \eqref{eqn:x2_general} respectively.
The controller during $[0,\tau]$ is the value such that
\begin{align*}
\dot x_2 = -x_1+u =0 \Rightarrow -s+u = 0\Rightarrow u = s
\end{align*}
which implies that during this time interval, with the constant control input $u=s$, we can keep the state stay at the initial state.
As we know, we have five unknowns, i.e., $C_1\sim C_4$ and $\tau$.
Using the fact that $u(\tau)=s$ and the boundary conditions of states, we have the following five constraints:
{\small 
\begin{align*}
&C_1 \cos \tau +C_2 \sin \tau+C_3\tau \cos \tau+C_4 \tau \sin \tau = s\\
&(C_4-C_1)\sin \tau+(C_2+C_3)\cos \tau
-C_3 \tau \sin \tau +C_4 \tau\cos \tau = 0\\
&C_1 \cos T+C_2 \sin T +C_3 T \cos T +C_4 T \sin T = x_f \\
&(C_4-C_1)\sin T +(C_2+C_3)\cos T-C_3 T\sin T+C_4 T \cos T = 0 \\
&-2C_3 \sin \tau+2C_4 \cos \tau = s.
\end{align*}
}%
which allows us to solve for the constants.
The cost function is
\begin{align*}
J &= \int_0^\tau \frac{1}{2} s^2 \mathrm{d}t + \int_\tau^T \frac{1}{2}u^2 \mathrm{d}t \\
&= \frac{1}{2}s^2 \tau + \frac{1}{2}\int_\tau^T (-2C_3\sin t +2C_4 \cos t)^2 \mathrm{d}t.
\end{align*}
\subsubsection{Case 2: $s<0<x_f$}
When $T$ is large, the stay will be in the middle because it needs less energy to stay there (i.e., ``MWM'').
We denote the staying time interval as $[\tau_1, \tau_2]$.
During $[\tau_1, \tau_2]$,
\begin{align*}
\dot x_1 = x_2 = 0 \\
\dot x_2 = -x_1+u = 0,
\end{align*}
which gives
\begin{align}
x_{1s}=x_1(t) = u, \quad \forall  t \in [\tau_1, \tau_2] \\
x_{2s}=x_2(t) = 0 , \quad \forall t \in [\tau_1, \tau_2]
\end{align}
where $(x_{1s}, x_{2s})$ is a constant staying state.
Thus the solution has the following form:
\begin{align*}
u(t) = \left\{\begin{matrix}
u^c(t),\quad \forall t \in [0,\tau_1]\\ 
x_{1s}, \quad \forall t \in [\tau_1, \tau_2]\\ 
u^p(t), \quad \forall t \in [\tau_2, T]
\end{matrix}\right.
\end{align*}
where $u^c(t)$, $u^p(t)$ has the same formula as in \eqref{eqn:ori_sol},
and 
\begin{align}
x_1(t) = \left\{\begin{matrix}
x_1^c(t), \quad t \in [0, \tau_1]\\ 
x_{1s}, \quad t\in [\tau_1, \tau_2]\\ 
x_1^p(t), \quad t \in [\tau_2, T]
\end{matrix}\right.
\end{align}
and
\begin{align}
x_2(t) = \left\{\begin{matrix}
x_2^c(t), \quad t \in [0, \tau_1]\\ 
x_{2s}, \quad t\in [\tau_1, \tau_2]\\ 
x_2^p(t), \quad t \in [\tau_2, T]
\end{matrix}\right.
\end{align}
where $x_1^c(t), \; x_1^p(t),\; x_2^c(t),\; x_2^p(t)$ satisfying the form of Eqn. \eqref{eqn:ori_sol} are some continuously differentiable functions dependent on time $t$.
There are 11 unknowns to be solved, i.e., $C_1\sim C_4$ (parameters for $x_{1,2}^c(t)$), $C_1'\sim C_4'$ (parameters for $x_{1,2}^p(t)$), $\tau_1$, $\tau_2$, and $x_{1s}$ while the known boundary conditions are
\begin{align*}
\left\{\begin{matrix}
x_1^c(0)&=s\\ 
x_2^c(0)&=0\\ 
x_1^c(\tau_1)&=x_{1s}\\ 
x_2^c(\tau_1)&=0\\ 
x_1^p(\tau_1) &= x_{1s}\\
x_2^p(\tau_2)&=0\\ 
x_1^p(T)&=x_f \\
x_2^p(T)&=0 \\
u^c(\tau_1) &= x_{1s}\\
u^p(\tau_2) &= x_{1s}
\end{matrix}\right.
\end{align*}
Combined with optimizing the cost function, 
\begin{align*}
J = \int_0^{\tau_1} \frac{1}{2}u^2 \mathrm{d}t + \frac{1}{2}x_{1s}u^2 (\tau_2-\tau_1)+\int_{\tau_2}^T \frac{1}{2}u^2 \mathrm{d}t
\end{align*}
we can finally obtain the solutions.
\subsubsection{Case 3: $s<x_f<0$}
When $T$ is large, the stay will be in the terminal state (i.e., ``MW'').
Thus the solution has the following form:
\begin{align}
x_1(t) = \left\{\begin{matrix}
x_1^c(t), \quad t\in [0,\tau]\\ 
x_f, \quad t \in [\tau, T]
\end{matrix}\right.
\end{align}
and
\begin{align}
x_2(t) = \left\{\begin{matrix}
x_2^c(t), \quad t\in [0,\tau]\\ 
0, \quad t \in [\tau, T]
\end{matrix}\right..
\end{align}
This means that when $t\in [\tau, T]$,
\begin{align*}
\dot x_1(t) &= x_2(t) = 0 \\
\dot x_2(t) &= -x_1(t) + u =0 \Rightarrow u = x_1(\tau)=x_1(T) = x_f .
\end{align*}
The constraints $x_1(0)=s$, $ x_2(0)=0$, $x_1(\tau)=x_f$, $  x_2(\tau)=0$, $u(\tau)=x_f$ give us
\begin{align*}
\left\{\begin{matrix}
C_1 = s \\
C_2+C_3 = 0 \\
C_1\cos \tau+C_2\sin \tau+C_3\tau\sin \tau+C_4 \tau \sin \tau = x_f \\
(C_4-C_1) \sin \tau+(C_2+C_3)\cos \tau-C_3 \tau \sin \tau + C_4 \tau \cos \tau = 0 \\
-2C_3 \sin \tau+2C_4 \cos \tau = x_f
\end{matrix}\right.
\end{align*}
solving which we can obtain the solution of $x_1^c(t)$ and $x_2^c(t)$.

\section{Simulation} \label{sec:simulation}
In this section, we compare our solution with the numerical solution given by \cite{ICLOCS2}.
The following simulations are all done on a personal computer with MATLAB R2020b.
\subsection{Experiment 1: Initial state $(s,0)$ and $0= s<x_f$}
To illustrate the results we obtained in subsection \ref{subsec:s1}, we divide our experiments into two parts: $T$ is small and $T$ is large.
We fix $x_f = 2$.
\subsubsection{$T$ is small}
When the terminal time $T$ is small, we have the following results which satisfy our analytical solution and physical explanation.
Let $T=1$.
Fig. \ref{fig:EX1_T1} shows the numerical solution (NS) using the toolbox \cite{ICLOCS2} and the analytical solution (AS) we derived in Eqn. \eqref{eqn:finalx2}.
The numerical solution, the analytical solution, and the physics analysis coincide with each other.
\begin{figure}[!htp]
\begin{center}
\includegraphics[width=8.4cm]{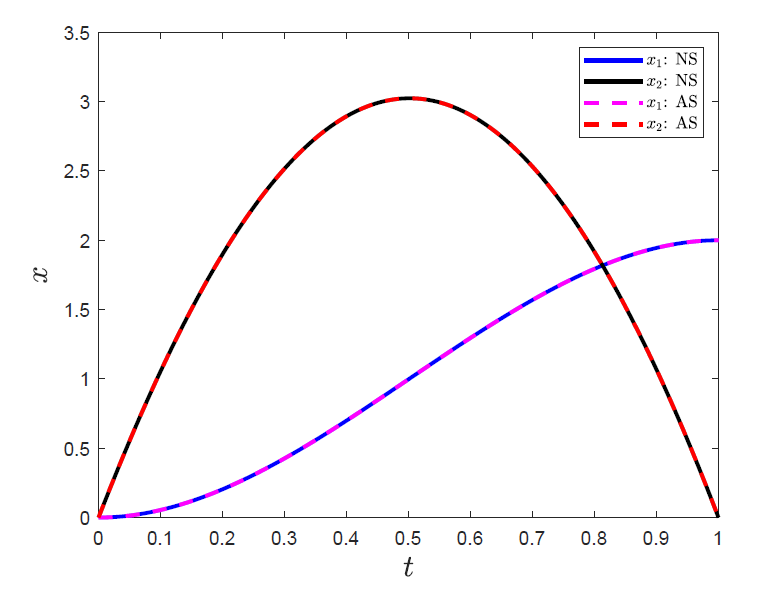}
\caption{Evolution of state: terminal time $T=1$ (legend ``AS'' denotes analytical solution, ``NS'' denotes numerical solution).} 
\label{fig:EX1_T1}
\end{center}
\end{figure}

\subsubsection{$T$ is large}
When the terminal time $T$ is large, say, $T=5$, we have the following results.
Fig. \ref{fig:EX1_T2} shows that our analytical solution coincides with the numerical solution.
Moreover, the optimal cost given by the numerical solution is $J=1.2732 $ which is indeed $\frac{x_f^2}{\pi}$.
\begin{figure}[!htp]
\begin{center}
\includegraphics[width=8.4cm]{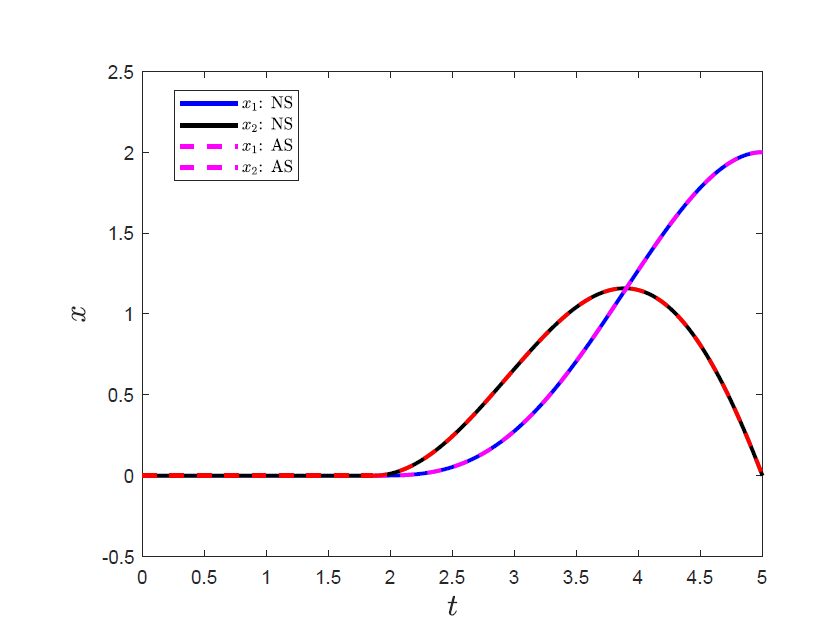}
\caption{Evolution of state (``WM''): terminal time $T=5$.} 
\label{fig:EX1_T2}
\end{center}
\end{figure}

Consider the harmonic oscillator in Fig. \ref{fig:hc}  to be moved from its initial natural state to its final state.
Physically, we would want to pull it right.
If the total time is short, we would pull it right directly to the end position.
If the terminal time is long, we would try to stay at the initial position because it needs less energy to hold the oscillator.
So both simulation and analytical analysis satisfy our physical observations.
\subsection{Experiment 2: initial state $(s,0)$ and $s>0$}
When $T$ is large, we would also want to stay at the beginning because the longer the oscillations, the more energy is needed to hold it at the same place.
Let $s = 1$, $x_f=2$, $T=5$.
Fig. \ref{fig:EX2_T2} shows that our solution is identical to our numerical solution obtained using the toolbox \cite{ICLOCS2}.
The optimal cost is $J\approx 3.918$, $\tau \approx 2.568$.
\begin{figure}[!htp]
\begin{center}
\includegraphics[width=8.4cm]{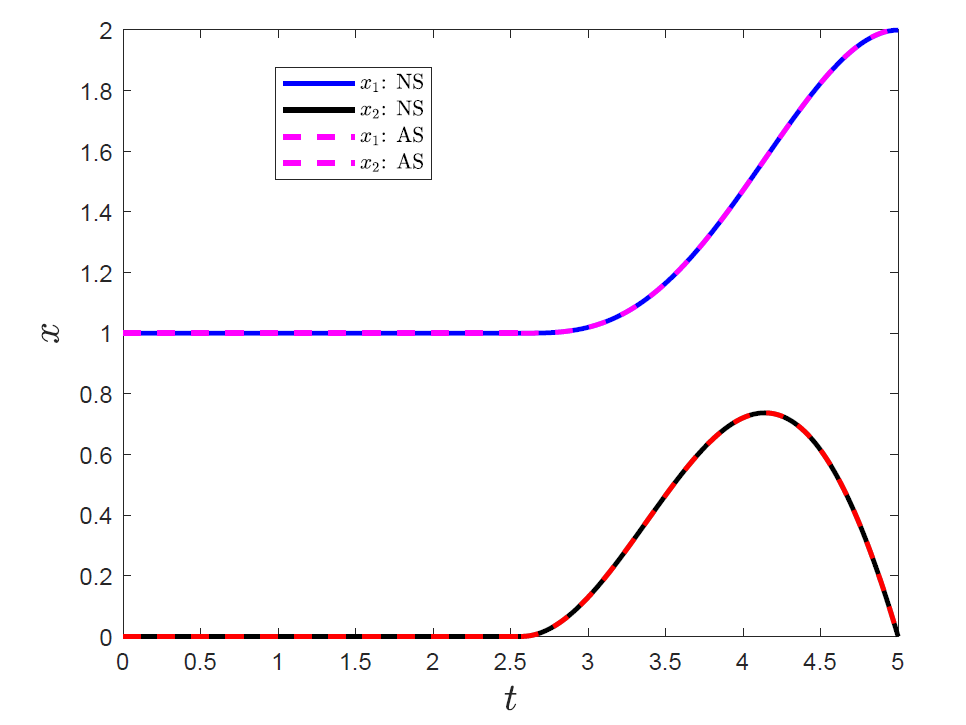}
\caption{Large terminal time (``WM''): $s=1, \; x_f = 2, \; T=5$.} 
\label{fig:EX2_T2}
\end{center}
\end{figure}

\subsection{Experiment 3: initial state $(s,0)$ and $s<0$ and $x_f >0$}
Let $s = -2$, $x_f = 1$.
In this experiment, we find that as $T\rightarrow \infty$, the staying point will be $(0,0)^\top$.
The optimal cost using our method is $J\approx 1.5109$, $\tau_1 = 3.3333$, $\tau_2=5.1890$, $x_{1s}=0.2279$ as shown in Fig. \ref{fig:EX3_T2}.
If $|s|>|x_f|$, the staying point $x_{1s}>0$, otherwise, $x_{1s}<0$.
\begin{figure}[!htp]
\begin{center}
\includegraphics[width=8.4cm]{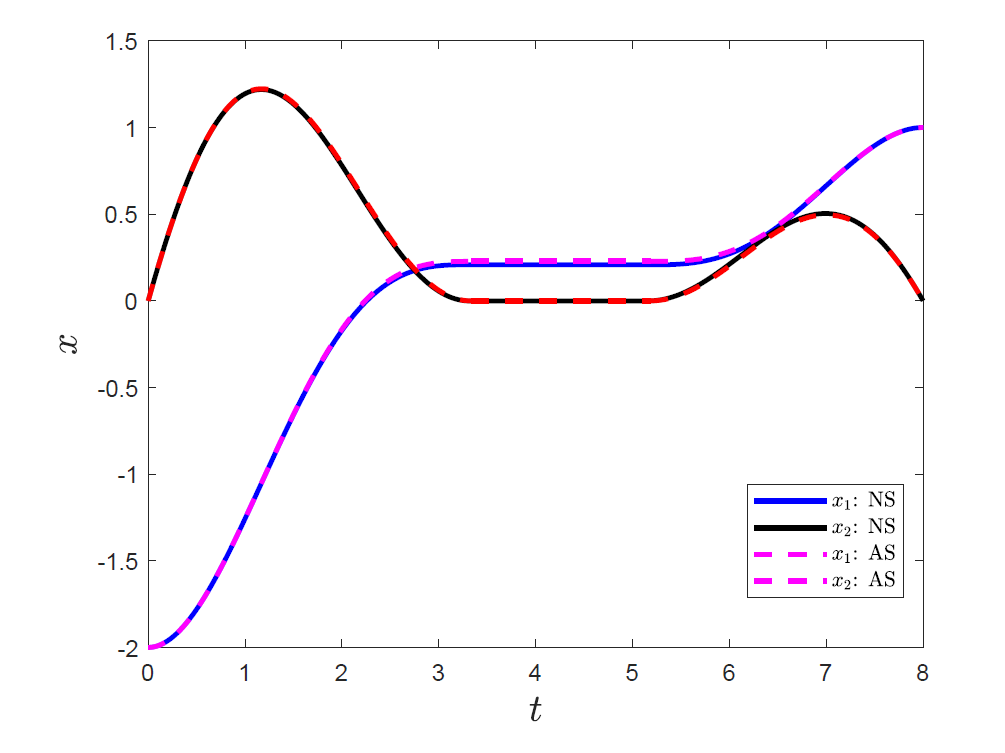}
\caption{Large terminal time (``MWM''): $s=-2, \; x_f=1,\; T=8$.} 
\label{fig:EX3_T2}
\end{center}
\end{figure}

Physically, this scenario means that, at the beginning, the oscillator is compressed, and we want to pull it to a stretched state ($x_f > 0$).
Obviously, it will stay at some position in the middle for the same reason as aforementioned.

\subsection{Experiment 4: initial state $(s,0)$ and $s<x_f<0$}
Let $x_f = -1$, $s=-2$.
This scenario is symmetric to Experiment 1.
We found that $\tau \approx 2.432$, $J \approx 3.918$ using both our method and the toolbox.
Interestingly, it has the same energy consumption compared to its symmetric case.
As we can see from Fig. \ref{fig:EX4_T2}, our solution and the numerical solution given by the toolbox are the same.
\begin{figure}[!htp]
\begin{center}
\includegraphics[width=8.4cm]{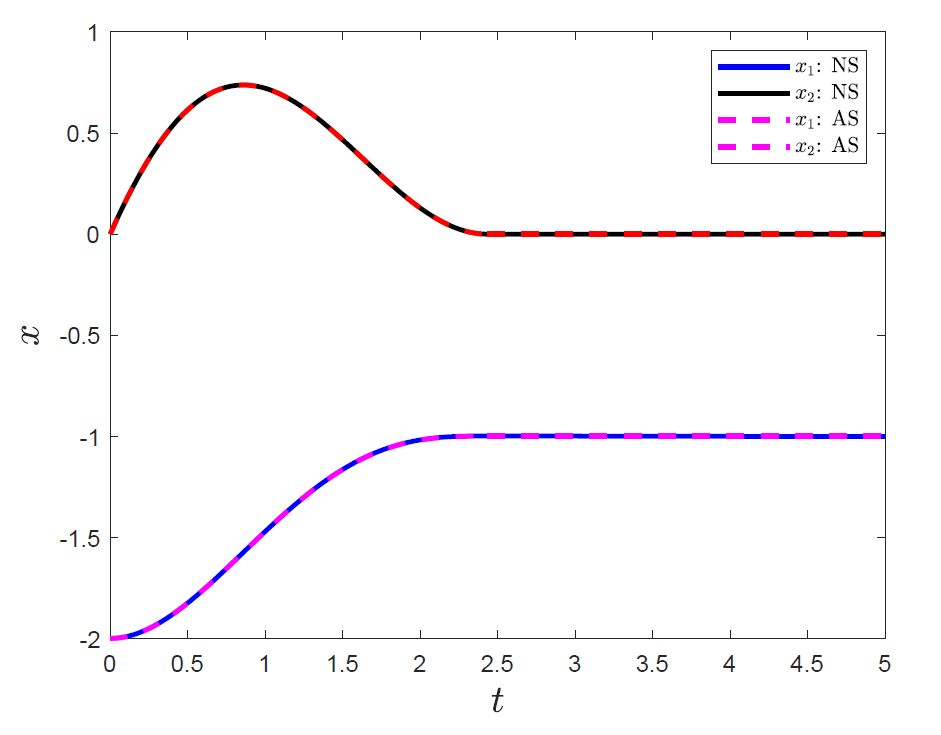}
\caption{Large terminal time (``MW''): $s=-2, \;x_f=-1,\; T=5$.} 
\label{fig:EX4_T2}
\end{center}
\end{figure}
These phenomena coincide from the physics perspective of the optimal control of the harmonic oscillator.
\section{Conclusion}\label{sec:conclusion}
In this paper, we examined and solved the optimal control problem of controlling a harmonic oscillator in a forward motion in terms of minimizing the energy given a fixed terminal time and terminal position.
More specifically, we found the bound of the terminal time when the solutions start becoming non-smooth, and we derived the explicit analytical solution of the optimal controller when the initial state is in the equilibrium of the autonomous system.
We also analyzed the optimal solution when the initial state is in a state of stretching or compression.
Simulation results verified our analysis and we provided physical justification of our theoretical results.
These results shed some light on the optimal swimming policy in a vortex.
We expect this work will also give some insight into other linear time-invariant systems with complex eigenvalues.
Future work will further extend to the unsolved optimal control \cite{MI_ACC} of similar systems with state-dependent switched dynamics or stage cost which will appear multiple switching phenomena at the switching interface.

\bibliographystyle{./IEEEtran} 
\bibliography{./IEEEabrv,./IEEEexample}

\end{document}